\documentclass[10pt, conference, letterpaper]{IEEEtran}
\IEEEoverridecommandlockouts
\usepackage[T1]{fontenc}
\usepackage{graphicx}
\usepackage{subcaption}
\usepackage{graphicx}
\usepackage{amsmath,amssymb,amsfonts,amsthm}
\usepackage{hyperref}
\usepackage{mathtools}
\usepackage[dvipsnames]{xcolor}
\usepackage{cite}

\newtheorem{problem}{Problem}
\newtheorem{theorem}{Theorem}

\newcommand{\distg}[2]{D_{#1}(#2)}

\newcommand{\hlcolor}{Yellow!35}
\newcommand{\hlcolorTwo}{LimeGreen!35}
\makeatletter
\newenvironment{btHighlight}[1][]
{\begingroup\tikzset{bt@Highlight@par/.style={#1}}\begin{lrbox}{\@tempboxa}}
{\end{lrbox}\bt@HL@box[bt@Highlight@par]{\@tempboxa}\endgroup}

\newcommand\btHL[1][]{%
  \begin{btHighlight}[#1]\bgroup\aftergroup\bt@HL@endenv%
}
\def\bt@HL@endenv{%
  \end{btHighlight}%
  \egroup
}
\newcommand{\bt@HL@box}[2][]{%
  \tikz[#1]{%
    \pgfpathrectangle{\pgfpoint{1pt}{0pt}}{\pgfpoint{\wd #2}{\ht #2}}%
    \pgfusepath{use as bounding box}%
    \node[anchor=base west, fill=\hlcolor,outer sep=0pt,inner xsep=1pt, inner ysep=0pt, rounded corners=2pt, minimum height=\ht\strutbox+2pt,#1]{\raisebox{1pt}{\strut}\strut\usebox{#2}};
  }%
}

\newenvironment{btHighlightTwo}[1][]
{\begingroup\tikzset{bt@HighlightTwo@par/.style={#1}}\begin{lrbox}{\@tempboxa}}
{\end{lrbox}\bt@HLTwo@box[bt@HighlightTwo@par]{\@tempboxa}\endgroup}

\newcommand\btHLTwo[1][]{%
  \begin{btHighlightTwo}[#1]\bgroup\aftergroup\bt@HLTwo@endenv%
}
\def\bt@HLTwo@endenv{%
  \end{btHighlightTwo}%
  \egroup
}
\newcommand{\bt@HLTwo@box}[2][]{%
  \tikz[#1]{%
    \pgfpathrectangle{\pgfpoint{1pt}{0pt}}{\pgfpoint{\wd #2}{\ht #2}}%
    \pgfusepath{use as bounding box}%
    \node[anchor=base west, fill=\hlcolorTwo,outer sep=0pt,inner xsep=1pt, inner ysep=0pt, rounded corners=2pt, minimum height=\ht\strutbox+2pt,#1]{\raisebox{1pt}{\strut}\strut\usebox{#2}};
  }%
}

\usepackage{tikz}

\definecolor{codepurple}{rgb}{0.58,0,0.82}
\usepackage{listings}
\lstset{
   language=Java,
   captionpos=b,
   numbers=left,
   numbersep=4pt,
   xleftmargin=1.2em,
   tabsize=2,
   backgroundcolor = \color{White},
   basicstyle=\small\selectfont\ttfamily,
   commentstyle=\color{Mahogany}\textit,
   keywordstyle=\color{BlueViolet}\textbf,
   numberstyle=\footnotesize\color{Gray},
   stringstyle=\color{codepurple},
   morekeywords={struct, var, val, fun, CAS, FAA, GetAndSet, in, inner, func, or, and, not, parallel, DCSS, DCSS_Read, LL, SC, override, repeat, atomic, private, lambda, is, when, constructor},
   emph={Thread,TAKEN,CANCELLED,REFUSE,SIMPLE,SMART,RESUMED,BROKEN,MAX_SPIN_CYCLES,ASYNC,SYNC},
   emphstyle=\textbf,
   showlines=true,
   escapechar=\$,
   moredelim=[is][\textcolor{Bittersweet}]{\%\%}{\%\%},
   moredelim=[is][\textcolor{Bittersweet}]{\#}{\#},
   moredelim=**[is][\btHL]{@}{@},
   moredelim=**[is][\btHLTwo]{^}{^},
    classoffset=1, 
    otherkeywords={&,:,;,[,],=,\{,\},<,>,!=,()},
    morekeywords={&,:,;,[,],=,\{,\},<,>,!=,()},
    keywordstyle=\textbf,
    classoffset=0
}

\def\BibTeX{{\rm B\kern-.05em{\sc i\kern-.025em b}\kern-.08em
    T\kern-.1667em\lower.7ex\hbox{E}\kern-.125emX}}

\makeatletter 
\newcommand{\linebreakand}{%
  \end{@IEEEauthorhalign}
  \hfill\mbox{}\par
  \mbox{}\hfill\begin{@IEEEauthorhalign}
}
\makeatother 

\begin{document}
\title{In the Search of Optimal Tree Networks: Hardness~and Heuristics}

\author{
\IEEEauthorblockN{Maxim Buzdalov}
\IEEEauthorblockA{
\textit{Aberystwyth University}\\
Aberystwyth, UK \\
mab168@aber.ac.uk}
\and
\IEEEauthorblockN{Pavel Martynov}
\IEEEauthorblockA{
\textit{ITMO University}\\
Saint-Petersburg, Russia \\
covariancemomentum@gmail.com}
\and
\IEEEauthorblockN{Sergey Pankratov}
\IEEEauthorblockA{
\textit{ITMO University}\\
Saint-Petersburg, Russia \\
zergey.gad@gmail.com}

\linebreakand

\IEEEauthorblockN{Vitaly Aksenov}
\IEEEauthorblockA{
\textit{City, University of London}\\
London, UK \\
aksenov.vitaly@gmail.com}
\and
\IEEEauthorblockN{Stefan Schmid}
\IEEEauthorblockA{
\textit{TU Berlin}\\
Berlin, Germany \\
stefan.schmid@tu-berlin.de}
}

\maketitle              
\begin{abstract}

Demand-aware communication networks are networks whose topology is optimized toward the traffic they need to serve. 
These networks have recently been enabled by novel optical communication technologies and are investigated intensively in the context of datacenters.

In this work, we consider networks with one of the most common topologies~--- a binary tree.
%
We show that finding an optimal demand-aware binary tree network is NP-hard.
Then, we propose optimization algorithms that generate efficient binary tree networks on real-life and synthetic workloads.
%
\end{abstract}

\begin{IEEEkeywords}
demand-aware networks, binary trees, NP-hardness, heuristics
\end{IEEEkeywords}

\section{Introduction}

Modern datacenters serve huge amounts of communication traffic which impose stringent performance requirements on the underlying network.
Most of the datacenters nowadays are designed for uniform (all-to-all) traffic independently of the actual traffic patterns they serve, typically, a fat-tree topology~\cite{leiserson1985fat}.

This paper explores an alternative design which has recently received attention: \emph{demand-aware} networks, that is, networks whose topology is optimized toward the traffic.
%
%
%
We focus on the most fundamental class of network topologies in this paper: the binary tree topology.
In other words, we want to design an optimal static binary tree network given a demand matrix (which can also be represented as a demand graph).

Demand-aware binary tree networks have been studied before in the literature. More specifically, researchers have studied binary \emph{search tree} networks~\cite{7066977}, and showed that such networks can be computed in polynomial time. In this paper, we relax the requirement that the tree must support the search property.
By alleviating this constraint we allow ourselves to access more optimal solutions.

In this work, we show that unlike the search tree variant, the problem of finding an optimal binary tree is NP-complete.
We propose optimization algorithms that work well on synthetic and real loads: on most of them, our generated binary tree networks outperform the binary search tree networks significantly.
All our algorithms start in some configuration and then apply newly introduced heuristics.
We present several algorithms to get the initial configuration, for example, the optimal binary search tree algorithm and a novel maximum spanning tree algorithm.
Then, we apply heuristics (namely mutations for the $(1+1)$-evolutionary algorithm) on top of these solutions: 1)~swapping the neighbours of two adjacent vertices; 2)~replacing an edge; and 3)~swapping two random subtrees.
These heuristics are used as mutations in our evolutionary algorithm which allows us to employ the power of the simulated annealing algorithm.
On average our optimization approach with random heuristics reaches $10\%$ improvement against the proposed initialization algorithms.

\noindent\textbf{Related work. }
%
%
Traditionally, datacenter designs are demand-oblivious and optimized for the ``worst-case'', often providing nearly full bisection bandwidth, catering to dense, all-to-all communication patterns.
The recent advances in optical technologies~\cite{farrington2010helios,hamedazimi2014firefly,liu2014circuit,ghobadi2016projector} have enabled easy reconfiguration of physical network topologies, leading to \emph{reconfigurable} networks.
See~\cite{avin2018demandaware} for an algorithmic taxonomy of the field.

Many existing network design algorithms rely on estimates or snapshots of the traffic demands, from which an optimized network topology is (re)computed periodically~\cite{singla2010proteus, avin2022demand, avin2020demand, avin2018rdan, foerster2018characterizing}.
However, all these algorithm are intended for more general networks and only provide approximations lacking the optimality property: in-between periods they build non-optimal networks for the given demand.
%

We are aware of only two results providing optimal demand-aware networks.
The first one~\cite{GareyJS76} shows that the problem is NP-complete when a demand-aware network should be a graph of degree two, i.e., a line or a cycle.
Note that it does not follow that the problem for a binary tree topology is NP-complete since the restrictions on networks are different.
For example, it was shown in~\cite{7066977} that a problem for a binary search tree topology can be solved in polynomial time.
Each network with this topology has the following property: all nodes in the left subtree have smaller identifiers than the root, while those in the right subtree have larger identifiers.
The construction algorithm is simply dynamic programming on the segments.


\noindent\textbf{Roadmap. }
In Section~\ref{sec:background}, we state the problem and discuss optimization algorithms. In Section~\ref{sec:proof}, we prove that the problem is NP-hard. In Section~\ref{sec:heuristics}, we propose our heuristic algorithms. In Section~\ref{sec:exp}, we show that these algorithms find good enough solutions. We conclude with Section~\ref{sec:conclusion}.

\section{Background}
\label{sec:background}

\subsection{Demand-aware network design problem}




\paragraph{Demand Matrix}

The \emph{demand} of a network refers to the pattern of usage and traffic on the network over a certain period of time. It is a characterization of the amount and type of data that is transmitted across the network at different times of the day, week, or month.

The demand is affected by various factors such as the type of applications being used, the time of day, etc.
The nature of the demand for a network can vary widely depending on the specific network and its usage patterns~\cite{avin2020complexity,roy2015inside,doe2016characterization,alizadeh2013pfabric}.
%

In our chosen theoretical model, \emph{demand}, or \emph{load}, can be defined as a square \emph{demand matrix} $W$ of size $n$ where $n$ is the number of hosts in the network. Value $W_{ij}$ numerically denotes the amount of traffic between nodes $i$ and $j$ of the network~--- we can think of it as the frequency of communication between these two nodes (or the probability).
In this paper, we abstract from the actual meaning of these numbers.


\paragraph{Static Optimal Networks}
A static optimal network is a network that is designed to provide the best possible performance and efficiency for a particular set of conditions, without considering changes in traffic patterns or usage over time.

In our theoretical model, we want to obtain the binary tree network which minimizes $\sum_{1 \leq i,j \leq n} W_{ij} \cdot D_{ij}$ where $W$ is the demand matrix, i.e., an adjacency matrix of a demand graph, and $D$ is a distance matrix in the constructed network.
We refer to the value of the sum above as the \emph{cost} of the network: 
\begin{equation}
C(D, W) = \sum\limits_{1 \leq i,j \leq n} W_{ij} \cdot D_{ij}. \label{eq:cost}
\end{equation}

\paragraph{Topological limitations}
It is obvious that the optimal solution without any limitations is the full graph.
However, this topology is not scalable.

Thus, it seems natural to narrow down the possible network topology for our statically optimal network. There exist several standard topologies: a line; a binary tree; a tree topology; and $\Delta$-bounded topology, more specifically, $3$-bounded topology.

\paragraph{Our problem}

In this work, we are interested in demand-aware networks with the binary tree topology.
%
%
However, in a simple binary tree we do not require an additional search property.
We show that a problem to find an optimal binary tree topology is NP-complete and then we present several optimization algorithms that achieve better results than the optimal binary search tree.

\begin{problem}[Optimal Binary Tree Problem (OBT)]
Let $W = [w_{ij}]$ be a symmetric demand matrix.
Let $C$ be the required arrangement cost.
For a connected undirected graph $G$, $D_G(i, j)$ denotes the shortest distance between vertices $i$ and $j$.

Question: Does there exist a binary tree graph $B$, such that:

\[\sum_{i > j} w_{ij} \cdot D_B(i, j) \leq C. \]
\end{problem}

Please note that we choose as OBT the decision version of the problem, thus, if we prove its NP-hardnees the the minimization problem is also NP-hard.

\subsection{Optimization heuristics employed}

As our OBT problem is NP-hard, it will typically be infeasible to aim at an exact solution for realistic input sizes. Instead, it is more reasonable to target approximate solutions by developing various heuristic solutions, possibly with approximation guarantees. These solutions typically include problem-tailored construction procedures in polynomial time, similar to the one proposed in~\cite{7066977} for a similar problem, and meta-heuristic approaches, such as genetic algorithms~\cite{holland} and other randomized optimization heuristics, which can be problem-agnostic or tailored to the problem to various degrees.

For the construction procedures, we are going to slightly extend the approach in~\cite{7066977}, and to propose a more computationally efficient greedy algorithm based on finding a maximum spanning tree. For the meta-heuristic side, we limit ourselves to the simple hill-climber scheme, known also as the $(1+1)$ evolutionary algorithm~\cite{droste-ea}: the best-so-far solution is kept, and the algorithm tries to improve it by introducing small changes, or \emph{mutations}, accepting a mutation if the solution is improved (that is, the cost is decreased). Despite the simplicity, in many cases this scheme is just as efficient as more complicated algorithms. The efficiency is nevertheless limited by the lack of recombination, or crossover, operators, which is a fundamental limitation~\cite{doerr-johannsen-faster-blackbox,jansen-crossover,sudholt-crossover-speeds-up-evco}, but developing a right operator requires additional effort which we leave for the future work.

We also aim at improving the performance of the proposed algorithms at various levels to create a sound baseline. The first of these levels is to avoid, whenever possible, evaluation of the cost function from scratch: approaches like incremental fitness evaluation allow for substantial performance improvements leading to dramatic increase in tractable problem sizes~\cite{bosman-partial-evaluations-ec2021,bosman-linkage-million-variables-gecco17,buzdalovD-gecco17-3cnf,deb-billion-variables} and recently extended even to general crossover operators~\cite{patch-based-ga,pitzerA-logarithmic-fitness-cec2021}. The second level is to design the algorithms tailored to the structure of the problem in the right way, an approach known as gray-box optimization~\cite{next-generation-ga-tutorial}. This approach includes efficient deterministic operators such as the look-ahead mutation~\cite{chicano-moves-in-a-ball-gecco14} and the partition crossover~\cite{whitley-gpx-tsp-ec2020,sanches-whitley-tinos-tsp-px} which are capable of quickly constructing the best solution in a large subspace, leading to huge performance improvements~\cite{whitley-one-million-variables-nk-gecco17,whitley-chicano-goldman}. As we are only in the beginning of exploration of the OBT problem, we limit ourselves with only few of such improvements, however, they already show their usefulness in our experiments.

\section{NP-hardness}
\label{sec:proof}

We present an NP-complete problem which is named Simple Optimal Linear Arrangement Problem (OLA)~\cite{GareyJS76} and then we reduce this problem to our OBT problem and by that we show NP-completeness of our problem.

\begin{problem}[Simple Optimal Linear Arrangement Problem (OLA)]
Let $G = (V, E)$ be an undirected graph with $|V| = n$.
Let $X$ be the required cost of the bijection.
Since $G$ is undirected, we assume that edge $uv$ is the same edge as $vu$.

Question: Does there exist a bijective function $\phi: V \rightarrow \{1, \ldots, n \}$, such that:

\[\sum_{uv \in E} |\phi(u) - \phi(v)| \leq W. \]
\end{problem}

\begin{figure*}[!ht]
  \centering
   \begin{subfigure}[t]{1\textwidth}
    \centering
    \includegraphics[width=0.6\textwidth]{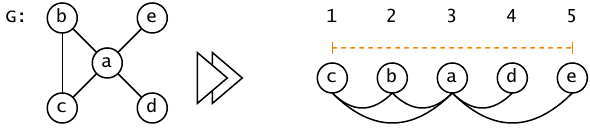}
    \caption{
        A solution of OLA for graph $G$ and labeling cost $W = 7$.
        Orange line represents indexing of vertices.
    }
    \label{fig:example:OLA}
  \end{subfigure}
  
  \hfill\\
  \hfill
  
  \begin{subfigure}[t]{1\textwidth}
    \centering
    \includegraphics[width=0.8\textwidth]{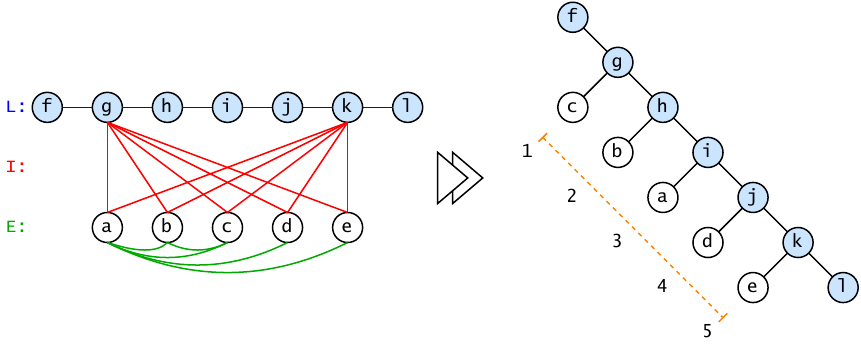}
    \caption{
        A solution of OBT, constructed from OLA instance on (\ref{fig:example:OLA}).
        Vertices $a$ to $e$ represent $V$. Vertices $f$ to $l$ represent $H$ and are highlighted with light blue.
        On the left, edges of $E$ are highlighted with green and have demand $1$ each.
        Edges of $I$ are highlighted with red and have demand $d_1 = 18$.
        Edges of $L$ are highlighted with blue and have demand $d_2 = 558$.
        The right graph is a binary tree with cost $C = 3905$ corresponding to the bijection at Figure~\ref{fig:example:OLA}.
    }
    \label{fig:example:BTA}
  \end{subfigure}
  \caption{%
    The example of OLA instance and an OBT instance reduced from OLA.
  }
  \label{fig:example}
\end{figure*}

\begin{theorem}
OBT problem is NP-complete.
\end{theorem}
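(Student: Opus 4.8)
The plan is to show $\mathrm{OBT}\in\mathrm{NP}$ and then to reduce the NP-complete Simple Optimal Linear Arrangement problem (OLA), recalled above, to OBT in polynomial time. Membership in NP is immediate: a certificate is the binary tree $B$ itself; one computes the distance matrix $D_B$ by a breadth-first search from each vertex and checks $\sum_{i>j} w_{ij}\, D_B(i,j)\le C$, which takes polynomial time.

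For NP-hardness, I would reduce from an arbitrary OLA instance $(G=(V,E),W)$ with $|V|=n$. The construction (illustrated in the figure with $d_1=18$, $d_2=558$) builds an OBT instance on a vertex set $V\cup H$, where $H$ is a set of \emph{helper} vertices, together with a symmetric demand matrix that uses a strict hierarchy of weights: a huge weight $d_2$ on a set $L$ of \emph{backbone} pairs inside $H$, an intermediate weight $d_1$ on a set $I$ of \emph{anchor} pairs between $V$ and $H$, weight $1$ on the original edges of $E$, and weight $0$ everywhere else. The weights are polynomially bounded and chosen so that in any binary tree $B$ whose cost is at most the target value $C$: (i) every $L$-pair is at distance exactly $1$, and since the $L$-pairs are laid out as a path this forces the corresponding helpers to form a simple path in $B$ — the \emph{backbone}; (ii) given (i), every $I$-pair realises its minimum possible distance, which forces each vertex of $V$ to hang as a pendant leaf off a prescribed slot of the backbone; (iii) the only remaining degree of freedom is the order in which the vertices of $V$ are attached along the backbone, i.e.\ a bijection $\phi\colon V\to\{1,\dots,n\}$. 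Because $B$ is a tree, the path between the leaves attached at backbone positions $\phi(u)$ and $\phi(v)$ runs along the backbone, so $D_B(u,v)=|\phi(u)-\phi(v)|+2$ for every $uv\in E$. Summing, the cost of $B$ equals a fixed constant $K$ (the contribution of the $L$- and $I$-pairs, independent of $\phi$) plus $2|E|+\sum_{uv\in E}|\phi(u)-\phi(v)|$. Setting $C:=K+2|E|+W$, an OBT solution of cost $\le C$ exists iff some bijection $\phi$ satisfies $\sum_{uv\in E}|\phi(u)-\phi(v)|\le W$, i.e.\ iff $(G,W)$ is a yes-instance of OLA; since $|H|$ and the bit-lengths of $d_1,d_2$ are polynomial in $n$ and $|E|$, the reduction runs in polynomial time, which together with membership in NP and NP-hardness of OLA yields the theorem.

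The cost bookkeeping and the polynomiality check are routine. The main obstacle is making the forcing argument of (i)--(ii) watertight: one must show that a single unit of ``violation'' at the backbone level — some $L$-pair pushed to distance $\ge 2$ — raises the cost by strictly more than the largest saving that could ever be recovered from all lower-priority pairs (bounded by $|E|$ times the backbone length at that level), and similarly that a violation at the anchor level outweighs any saving achievable on the weight-$1$ edges $E$. One must also confirm that the intended ``caterpillar'' is simultaneously a legal binary tree (maximum degree three, so each interior backbone helper has exactly one free slot for its pendant leaf) and the unique optimal shape up to the choice of $\phi$. Pinning down the number of helper vertices, exactly which pairs receive weights $d_1$ and $d_2$, and the precise values of $d_1$ and $d_2$ so that all of these inequalities hold at once is the delicate, problem-specific core of the proof.
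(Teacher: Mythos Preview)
Your proposal is correct and follows essentially the same route as the paper: reduce from OLA by adding a line of helper vertices $H$ with large backbone weights $d_2$ on consecutive $H$-pairs, intermediate anchor weights $d_1$ from each $v\in V$ to two fixed helpers (namely $h_2$ and $h_{n+2}$), unit weights on $E$, and a threshold $C$ calibrated so that the hierarchy forces a caterpillar whose leaf order encodes $\phi$. The paper makes concrete exactly the choices you flag as ``delicate'': $|H|=n+3$, $d_1=W+2m+1$, $d_2=(n^2+n+1)d_1$, and $C=(n+1)d_2+n(n+1)d_1+W+2m$, and then verifies your steps (i)--(iii) by the same contradiction-on-one-violation argument you sketch.
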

\begin{proof}
    It is clear that OBT lies in NP.
    To prove the NP-hardness, we reduce OBT to OLA.

    Consider an instance of OLA: a graph $G = (V, E)$ with $|V| = n$ and $|E| = m$ and the required bijection cost $W$.
    %
    %
    %
    We build an instance of OBT in the following manner (see a left graph on Figure~\ref{fig:example:BTA}).

    %
    We introduce $n + 3$ new vertices $H = \{h_{1}, h_{2}, \ldots, h_{n + 3}\}$.
    Sometimes we refer to a vertex $h_i$ as $v_{n + i}$ and it should be clear from the context.
    %
     
    Then, we introduce two new sets of edges.
    $L = \{(h_i, h_j): h_i, h_j \in H; |i - j| = 1\}$ is a set of edges for a line on vertices from $H$, shown in blue on the left of Figure~\ref{fig:example:BTA}.
    And $I = \{(v_i, h_j): v_i \in V, j \in \{2, n + 2\}\}$ is a set of edges connecting vertices from $V$ to the second and penultimate nodes of $H$, shown in red on the left of Figure~\ref{fig:example:BTA}.

    Then, we introduce a demand matrix $W$ for $2 \cdot n + 3$ vertices in $V \cup H$:

    \begin{equation}\label{eq:A}
        W_{ij}=\begin{cases}
            1, & \text{for $(v_i, v_j) \in E$}.\\
            d_1 = X + 2 \cdot m + 1, & \text{for $(v_i, h_{j - n}) \in I$}.\\
            d_2 = (n^2 + n + 1) \cdot d_1, & \text{for $(h_{i - n}, h_{j - n}) \in L$}.\\
            0, & \text{otherwise}.
        \end{cases}
    \end{equation}

    Finally, we calculate the constant $C$ for our OBT problem.
    \begin{equation}\label{eq:C}
        C = (n + 1) \cdot d_2 + n \cdot (n + 1) \cdot d_1 + X + 2 \cdot m
    \end{equation}

Now we show the correctness of our reduction.

$OLA \Rightarrow OBT$: Suppose there exists a bijection function $\phi$ for $G$ with the cost $\leq X$ then there exists a binary tree with the cost at most $C$.

We construct  binary tree $B$ as shown on the right of Figure~\ref{fig:example:BTA}:
vertices of $H$ form a line and vertices of $V$ are connected to the named line in order $\phi$, starting from the second vertex of $H$.
In other words, vertex $v_i \in V$ is connected to vertex $h_{1 + \phi(i)}$. 
Now to show that this arrangement costs at most $C$:
\begin{itemize}
    \item the sum of costs for edges in $L = (n + 1) \cdot d_2$: each edge traverses distance of $1$~--- the adjacent vertices in $H$ are mapped to the adjacent vertices;
    \item the sum of costs for edges in $I = n \cdot (n + 1) \cdot d_1$: for every $v_i \in V$, $\distg{B}{v_i, h_2} + \distg{B}{v_i, h_{n + 1}} = n + 1$;
    \item the sum of costs for edges in $E \leq W + 2 \cdot m$: each edge $(v_i, v_j)$ traverses distance of $|\phi(i) - \phi(j)| + 2$. 
\end{itemize}
The sum of everything above is at most $C$.

$OBT \Rightarrow OLA$: Suppose there exists a binary tree $B$ for matrix $W$ from Formula~\ref{eq:A} with the cost of at most $C$ from Formula~\ref{eq:C} then there exists OLA with the cost at most $W$.

\begin{enumerate}
    \item Every edge from $L$ traverses the distance of exactly $1$ over $B$. \label{st:1}
    
        We prove this by contradiction: if at least one edge in $L$ traverses a distance of at least $2$, then, all edges of $L$ traverse the distance of at least $(n + 2)$.
        Thus, the total cost of $B$ is at least $(n + 2) \cdot d_2 = (n + 1) \cdot d_2 + (n^2 + n) \cdot d_1 + W + 2 \cdot m + 1 = C + 1$, which contradicts that the cost is at most $C$.

        Thus, vertices in $H$ has to form a line segment in $B$ and in the right order. \label{st:2}

    \item At most one vertex from $V$ can be adjacent to $h_i$ for $2 \leq i \leq n + 1$. \label{st:6}

        Since $B$ is a binary tree, the maximum degree of $h_i$ in $b$ is $3$.
        By Statement~\ref{st:2}, $h_i$ is adjacent to $h_{i - 1}$ and $h_{i + 1}$, which means only one vertex from $V$ can be adjacent to $h_i$ in $B$.

    \item For some $v_i$, the minimal possible value of $x = \distg{B}{v_i, h_2} + \distg{B}{v_i, h_{n + 1}}$ is $n + 1$. \label{st:3}
    
        We have three cases:
        1)~if $v_i$ is adjacent to $h_j$ ($2 \leq j \leq n + 1$), then $x = n + 1$;
        2)~if $v_i$ is adjacent to $h_1$ or $h_{n + 2}$, then $x = n + 3$;
        3)~if $v_i$ is not adjacent to any vertex in $H$, then $x > n + 1$ since $B$ is a tree.

    \item In $B$ every vertex of $V$ is adjacent to one of the vertices from $\{h_2, h_3, \ldots, h_{n + 1}\}$. \label{st:4}

        We prove this statement by contradiction: if some vertex $v_i$ is not adjacent to one of vertices from $\{h_2, h_3, \ldots, h_{n + 1}\}$, then $x = \distg{B}{v_i, h_2} + \distg{B}{v_i, h_{n + 1}} > n + 1$.
        By Statement~\ref{st:3}, all other vertices from $V$ contribute at least $(n + 1) \cdot d_1$ to the cost for edges in $I$.
        It follows that the total cost of edges in $I$ is at least $n \cdot (n + 1) \cdot d_1 + d_1$.
        By Statement~\ref{st:1}, the total cost of edges in $L$ is $(n + 1) \cdot d_2$.
        Hence, the total cost of $B$ is at least $(n + 1) \cdot d_2 + n \cdot (n + 1) \cdot d_1 + W + 2 \cdot m + 1 = C + 1$, which contradicts that the cost is at most $C$.

    \item The arrangement cost of $E$: $\sum_{(v_i, v_j) \in E} \distg{B}{v_i, v_j} \leq W + 2\cdot m$. \label{st:5}

        From Statements~\ref{st:1}~and~\ref{st:4}, the arrangement cost of $L$ and $I$ in $B$ is $(n + 1) \cdot d_2 + n \cdot (n + 1) \cdot d_1$.
        Subtracting that from $C$ we get the upper bound on the total cost of edges in $E$.
\end{enumerate}

Thus, by Statement~\ref{st:4}, every vertex $v_i \in V$ is adjacent to some $h_j$ with $j \in \{2, \ldots, n + 1\}$, and we can define $\phi(i) = j - 1$.
$\phi$ is bijective by Statement~\ref{st:6} and $B$ being a tree.
%
%
%
It follows from the above that
\begin{align*}
\sum\limits_{(v_i, v_j) \in E} |\phi(i) - \phi(j)| &= \sum\limits_{(v_i, v_j) \in E} (\distg{B}{v_i, v_j} - 2) \\&= \sum\limits_{(v_i, v_j) \in E} \distg{B}{v_i, v_j} - 2 \cdot m.    
\end{align*}
By Statement~\ref{st:5}, the cost of $\phi$ does not exceed $W$.
\end{proof}

By that, we proved that OBT is NP-complete.

\section{Optimization heuristics}
\label{sec:heuristics}

In this section we present various optimization heuristics that aim at improving over the existing algorithms, such as a dynamic programming to build the optimal binary search tree topology. Currently our algorithms are designed as a local search that follows the scheme of an $(1+1)$ evolutionary algorithm~\cite{droste-ea}: the best-so-far solution is kept, and the algorithm tries to improve it by introducing small changes, or \emph{mutations}, accepting a mutation if the solution is improved (that is, the cost is decreased). In many cases, we can detect that we have reached a local optimum, so that we can automatically restart the algorithm by sampling a different start point and continuing the search from there while also saving the best found solution.

This way, all our algorithms can be run for any given amount of time, and they only can terminate prematurely if the search space is completely tested. Our experiments involve setting a time limit for each run of each algorithm. A run is forcibly terminated once it reaches the time limit, and the result of an algorithm is considered to be the best solution it could produce before termination.

\subsection{General scheme}

As almost all of our optimization algorithms are designed to run ``forever'', we can represent them internally as infinite streams of pairs $\langle T, F \rangle$, where $T$ is the solution tree and $F$ is its cost computed as~\eqref{eq:cost}.
Instead of a pair, an algorithm can return an empty value \texttt{null}, which can only happen if either the time limit is exceeded or the search space is explored completely.

In this paper, we propose two types of algorithms: the \emph{initializers} which either exhaustively iterates over all combinatorial objects of a certain kind, or samples solutions from a certain distribution, and the \emph{local search} algorithms. An example of an initializer is the algorithm that constructs an optimal binary search tree from~\cite{7066977}, if we run it on a random permutation of vertices. We provide the list of initializers in Section~\ref{sec:algo:initializers}.

The second type of algorithms is a \emph{local search}, which takes an initializer to provide initial solutions, and a procedure that takes a tree and returns another tree, or in other words, a \emph{mutation operator}. We discuss our mutation operators in Section~\ref{sec:algo:mutations}. A mutation operator can also return \texttt{null}, which means that no new trees can be currently produced (for instance, because the entire neighborhood has already been iterated over). Local search can take an advantage of that: if the mutation operator returns \texttt{null}, it can effectively restart by asking the initialization routine for another starting point.

Other than the restart logic, our local search follows the $(1+1)$ scheme common to simple evolutionary algorithms. The current, or \emph{parent} tree is the only tree that survives between iterations. If the new tree generated by the mutation operator is better than the parent tree, the latter is replaced by the former. Since mutation operators may maintain context which is invalidated when the parent tree is changed, the mutation operator is notified of any such change.

Also, we employ techniques to speed up computations, such as the incremental fitness evaluation and the deterministic mutation operators that sample the best outcome. The details on these techniques are given later when discussing particular operators.

\subsection{Cost calculation}\label{sec:cost}

If an algorithm does not calculate the cost implicitly (for example, using an iterative calculation only for a part of the tree), we employ the na{\"i}ve approach if the demand matrix is dense and an approach based on computing lowest common ancestors (LCA)~\cite{lca-definition} if the demand matrix is sparse, thus, reducing the computation time.

The na{\"i}ve approach traverses the tree $n$ times starting with each vertex $v$ where the demand matrix has a non-zero row to compute distances from $v$ in the tree.
Then, it calculates the part of the sum in~\eqref{eq:cost} relevant to $v$ and sum everything up. The running time is $O(n^2)$ and the memory usage is $O(n)$ in this case.

The LCA approach uses a classic reduction to the range minimum query problem. The reduction itself takes $O(n)$ time and space. The next step is a preprocessing that takes $O(n \log n)$ time and space, which allows for subsequent queries that take constant time each~\cite{rmq-loglinspace,rmq-loglinspace2}. Since now LCA for a pair of vertices can be calculated in $O(1)$, the calculation of the cost takes $O(n \log n + m_D)$ where $m_D$ is the number of non-zero entries in the demand matrix. We also implemented a variation of this algorithm that takes $O(n)$ time and space for preprocessing, however, the overall efficiency was worse due to a larger constant factor per query, noting that typically $m_D \gg n$.

Based on preliminary experiments, we propose the following decision on which calculation algorithm to use based on the value of $m_D$: if $m_D \ge n^2 / 2$, we use the na{\"i}ve approach, otherwise, we use the LCA approach. Except for computation of optimal binary search trees and one of the mutation operators, we always use these algorithms to evaluate the cost, as a partial cost evaluation was found to introduce more computational overhead than it could save. Section~\ref{exp:rq:switch} elaborates on the size of the effect.

\subsection{Initializers}\label{sec:algo:initializers}

\subsubsection{Optimal binary search trees, random permutations}\label{algo:bst-rand}
To generate a tree, this algorithm samples a random permutation over $\{1, 2, \ldots, n\}$ and uses the algorithm presented in~\cite{7066977} to construct the ``optimal binary search tree'' using the sampled permutation as the ordering on the vertices. The name ``binary search tree'' refers to the construction strategy, detailed below, which results in construction of a tree that satisfies the search tree property on the chosen order of vertices (that is, all vertices in the left subtree are ``less'' than this vertex, and all vertices in the right subtree are ``greater'', with regards to the chosen order). Now, we describe that algorithm.

Assume we have a permutation of vertex indices $\pi$. For each range $[l; r]$, we first precompute a quantity $O_{l,r}$ which is the sum of the elements $W_{ij}$ of the demand matrix, where $i$ belongs to the set $\{\pi_l, \pi_{l+1}, \ldots, \pi_r\}$ and $j$ does not belong to this set. This can be done in $O(n^3)$ time and $O(n^2)$ space. 

Then, we compute the tree with the smallest cost that can be represented in the following way:
\begin{itemize}
    \item each vertex has an associated integer height $h \ge 0$;
    \item each vertex, except for the unique vertex with the maximum height, has exactly one adjacent vertex (the \emph{parent}) with a larger height;
    \item for every vertex $v$ with a parent $p$, all vertices including $v$ that are reachable from $v$ without visiting $p$ form a contiguous segment in the ordering defined by $\pi$ (that is, they form $\{\pi_l, \pi_{l+1}, \ldots, \pi_r\}$ for some $l \le r$).
\end{itemize}

The constructed tree is a correct binary search tree if each vertex is labelled by its position in the permutation $\pi$, which is the reason to call it the ``optimal binary search tree''. We can construct an optimal tree by deciding for each range $[l; r]$ the optimal cost $C_{l,r}$ and the corresponding best root vertex $R_{l,r}$. This can be done by iterating over $R_{l,r}$, using already computed values of $C$ as well as the values of $O$ mentioned above to compute the cost, and keeping the minimum. The overall running time of this approach is $O(n^3)$ with $O(n^2)$ memory.

Note that this approach would produce the optimal solution, if an appropriate order of vertices is supplied as the permutation. This is, however, sufficiently unlikely for realistic input sizes. Still, even a suboptimal guess can produce a good starting point for subsequent refinements.

\subsubsection{Optimal binary search trees, all permutations}\label{algo:bst-next}
The previous approach can be optimized if the considered permutations are not sampled at random, but rather iterated in a lexicographical order. If we have two permutations $\pi_1$ and $\pi_2$ which are identical from the first position up to position $t$, then in the optimal tree cosntruction we can re-use large portions of both the precomputed array $O$ and the cost/root arrays $C$ and $R$. Indeed, if their second index does not exceed $t$, the existing values would be identical for $\pi_1$ and $\pi_2$, so the old values can be safely kept. This reduces the running time from $O(n^3)$ to $O(n^2 \cdot (n - t))$.

While two randomly sampled permutations $\pi_1$ and $\pi_2$ will have $t=0$ with probability $1 - 1/n$, two lexicographically consecutive permutations have $n - t = O(1)$ on average. Indeed, for any two permutations $\pi_1$ and $\pi_2$, such that $\pi_2$ immediately follows $\pi_1$ in the lexicographical order, only the suffix of the form $v_1 v_2 \ldots v_{n-t}$ is changed, where $v_2 > v_3 > \ldots > v_{n-t}$ and $v_1 < v_2$. The number of consecutive permutation pairs where the length of this suffix is $u := n-t$ is $(u-1) \cdot \binom{n}{u} \cdot u!$, as for the $u$ chosen values in the suffix there are $u-1$ such suffixes due to the choice of $v_1$, there are $\binom{n}{u}$ choices of these values, and the rest of a permutation is arbitrary. Hence the total sum of suffix lengths is
\begin{equation*}
    \sum_{u=2}^{n} u \cdot (u-1) \cdot \binom{n}{u} \cdot u! = \sum_{u=2}^{n} \frac{n!}{(u-2)!} < n! \cdot e,
\end{equation*}
so the average value of $u = n-t$ is smaller than $e = O(1)$.

With this, we can iterate over all permutations in an amortized time $O(n^2)$, not $O(n^3)$, per permutation, so that the whole search can be performed $O(n)$ times faster, in $O(n! \cdot n^2)$ for the whole search space. The improvement in performance is further assessed in Section~\ref{exp:rq:next}.

To reduce the number of used permutations, we can additionally employ various approaches to symmetry breaking. In this work, we use the simplest one: we skip permutations where $\pi_1 > \pi_n$, because the reversed permutation would result in the same optimal tree. More approaches exist: for instance, reversing the part of a permutation that corresponds to a subtree of the optimal tree would result in a construction of the same tree. However, we do not see how to use such heuristics while improving the performance significantly.

\subsubsection{Maximum spanning tree}\label{algo:mst}
The algorithms listed above require time of order $O(n^3)$ to construct any meaningful approximation, which can be too expensive for large networks. Thus, we need alternative approaches which do not guarantee any kind of optimality, but are still good as a starting point.

One such approach is to assign some potential value to each potential edge (an unordered pair of vertices) and then to select edges greedily according to this value, while keeping in mind that we want to obtain a tree with a maximum vertex degree of 3: if the next edge connects already connected parts of the tree, or is adjacent to a vertex with degree 3, it is skipped. Connectivity tests in this case are easily implemented by the Disjoint Set data structure~\cite{Tarjan1984WorstcaseAO}, and the overall algorithm resembles Kruskal's algorithm~\cite{kruskal} for finding minimum spanning trees, subject to additional degree checks.

The particular heuristic of this sort, which we investigate in this paper, considers the demand matrix and treats each entry $W_{ij}$ as the weight of an edge between vertices $i$ and $j$. In order to minimize the sum of $W_{ij} \cdot D_{ij}$, where $D_{ij}$ is the distance between vertices $i$ and $j$ in the resulting tree, we would aim at decreasing $D_{ij}$ for large $W_{ij}$. This naturally results in the algorithm for finding the \emph{maximum} spanning tree, with an additional constraint on the maximum vertex degree.

The running time of the main part of this algorithm is $O(m_D (\log m_D + \alpha(n)))$. We need $O(m_D \log m_D)$ to sort all the potential edges. Then, we add each edge in disjoint sets with the complexity equal to the inverse Ackermann function $\alpha(n)$, leading to $O(m_D \cdot \alpha(n))$ in total. After this part has concluded, we may still have more than one connected component, as the demand graph may have more than one component as well. In this case, we can add arbitrary edges to make a tree, because the particular choice of these edges will not influence the cost of the tree. This part takes at most $O(n)$ time not changing the complexity in the real-life case (that is, $m_D \gg n$).

This approach may be unable to construct an optimal tree, though it can still generate a good starting point. Additionally, in the case where the demand matrix contains multiple equal values, there may be multiple maximum spanning trees. The actual implementation sorts the edges once, but randomly shuffles the edges with equal weights before attempting to construct a tree.

\subsection{Mutation Operators}\label{sec:algo:mutations}

Local searches or evolutionary approaches, like the one employed in this paper, often use mutation operators that introduce local changes and impose a small change of the cost function. We propose three types of algorithms: \emph{edge switch}, \emph{edge replacement}, and \emph{subtree swap}.

\subsubsection{Edge switch}\label{algo:mut-switch}
This operator samples an arbitrary edge $uv$ in the tree. Then, for each vertex $i$ different from $v$ that was adjacent to $u$, the operator removes an edge $ui$ and adds a new edge $vi$, and similarly, for each vertex $j$ different from $u$ that was adjacent to $v$, the operator removes an edge $vj$ and adds a new edge $uj$. This way, the endpoints of the edge appear switched. The example execution of this operator is illustrated in Fig.~\ref{fig:edge-switch}.

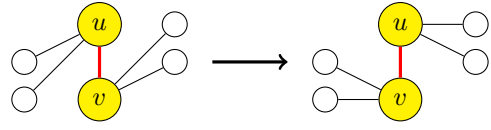
\begin{figure}[!t]
    \centering
    \begin{tikzpicture}[def/.style={draw, circle, minimum width=5pt}]
        \node[def] (Al1) at (0,0.5) {};
        \node[def] (Al2) at (0,1.0) {};
        \node[def,fill=yellow] (Al) at (1,0.5) {$v$};
        \node[def,fill=yellow] (Ar) at (1,1.5) {$u$};
        \node[def] (Ar1) at (2,1.0) {};
        \node[def] (Ar2) at (2,1.5) {};
        \draw (Al1)--(Ar) (Al2)--(Ar) (Al)--(Ar1) (Al)--(Ar2);
        \draw[red, very thick] (Al)--(Ar);

        \draw[->, very thick] (2.5,1)--(3.5,1);

        \node[def] (Bl1) at (4,0.5) {};
        \node[def] (Bl2) at (4,1.0) {};
        \node[def,fill=yellow] (Bl) at (5,0.5) {$v$};
        \node[def,fill=yellow] (Br) at (5,1.5) {$u$};
        \node[def] (Br1) at (6,1.0) {};
        \node[def] (Br2) at (6,1.5) {};
        \draw (Bl1)--(Bl) (Bl2)--(Bl) (Br)--(Br1) (Br)--(Br2);
        \draw[red, very thick] (Bl)--(Br);
    \end{tikzpicture}
    \caption{The edge switch operator}
    \label{fig:edge-switch}
\end{figure}

Consider two connected components that would appear in the tree if the edge $uv$ was to be removed. Denote as $U$ the component that contains vertex $u$, and as $V$ the component that contains vertex $v$. For any pair of vertices $i,j \in U$, the distance between $i$ and $j$ does not change as an effect of this operator, similarly it does not change for any $i,j \in V$. For any $i \in U \setminus \{u\}$ and any $j \in V \setminus \{v\}$, the distance also remains unchanged. The only changes to the cost are due to the following distance changes:
\begin{itemize}
    \item the distance between $i \in U \setminus \{u\}$ and $u$ increases by 1;
    \item the distance between $i \in U \setminus \{u\}$ and $v$ decreases by 1;
    \item the distance between $i \in V \setminus \{v\}$ and $v$ increases by 1;
    \item the distance between $i \in V \setminus \{v\}$ and $u$ decreases by 1.
\end{itemize}

As a result, this operator can be implemented in time $O(n + \mathrm{deg}(W,u) + \mathrm{deg}(W,v))$ including the recalculation of the cost: the time $O(n)$ is needed to mark the connected components $U$ and $V$, and then we need to consider only the edges of the demand graph that are adjacent to $u$ and $v$.

This operator is the only one in the paper which requires time less than $O(m_D)$, so, we expect to be able to explore much more solutions using this operator compared to other operators and algorithms.

Since the number of edges to remove is only $n-1$ for a given tree, we can remember which edges we have previously tried and give up by returning \texttt{null} if all edges have been tried and no improvement has been found. The local search framework will continue the search from a new point sampled by a chosen initialization procedure.

\subsubsection{Edge replacement: random}\label{algo:mut-relinkR}
A mutation which is similarly minor in structure, but more disruptive in the terms of cost changes, is to remove a randomly chosen edge and to connect the two components of the tree with a different edge. The example execution of this operator is illustrated in Fig.~\ref{fig:edge-random-replacement}.

\begin{figure}[!t]
    \centering
    \begin{tikzpicture}[def/.style={draw, circle, minimum width=5pt}]
        \node[def] (Al1) at (0.0,0.0) {};
        \node[def] (Al2) at (0.5,0.0) {};
        \node[def] (Al3) at (0.0,0.5) {};
        \node[def,fill=yellow] (Al4) at (0.5,0.5) {};
        \node[def] (Al5) at (0.5,1.0) {};
        \node[def] (Al6) at (0.0,1.0) {};
        \draw (Al1)--(Al2) (Al2)--(Al3) (Al3)--(Al4) (Al4)--(Al5) (Al5)--(Al6);
        \node[def] (Ar1) at (1.5,0.0) {};
        \node[def] (Ar2) at (2.0,0.0) {};
        \node[def,fill=yellow] (Ar3) at (1.5,0.5) {};
        \node[def] (Ar4) at (2.0,0.5) {};
        \node[def] (Ar5) at (2.0,1.0) {};
        \node[def] (Ar6) at (1.5,1.0) {};
        \draw (Ar1)--(Ar2) (Ar2)--(Ar4) (Ar3)--(Ar4) (Ar4)--(Ar5) (Ar5)--(Ar6);
        \draw[red, very thick] (Al4)--(Ar3);

        \draw[->, very thick] (2.5,0.5)--(3.5,0.5);

        \node[def] (Bl1) at (4.0,0.0) {};
        \node[def] (Bl2) at (4.5,0.0) {};
        \node[def] (Bl3) at (4.0,0.5) {};
        \node[def] (Bl4) at (4.5,0.5) {};
        \node[def,fill=yellow] (Bl5) at (4.5,1.0) {};
        \node[def] (Bl6) at (4.0,1.0) {};
        \draw (Bl1)--(Bl2) (Bl2)--(Bl3) (Bl3)--(Bl4) (Bl4)--(Bl5) (Bl5)--(Bl6);
        \node[def,fill=yellow] (Br1) at (5.5,0.0) {};
        \node[def] (Br2) at (6.0,0.0) {};
        \node[def] (Br3) at (5.5,0.5) {};
        \node[def] (Br4) at (6.0,0.5) {};
        \node[def] (Br5) at (6.0,1.0) {};
        \node[def] (Br6) at (5.5,1.0) {};
        \draw (Br1)--(Br2) (Br2)--(Br4) (Br3)--(Br4) (Br4)--(Br5) (Br5)--(Br6);
        \draw[blue, very thick] (Bl5)--(Br1);
    \end{tikzpicture}
    \caption{The random edge replacement operator}
    \label{fig:edge-random-replacement}
\end{figure}
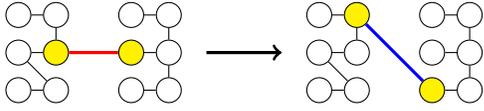

This mutation leaves the distances between vertices within each component unchanged, but can significantly change the distances between vertices in different components. Note that, when sampling the replacement edge, we have to check whether adding this edge violates the degree constraint. However, there is always a constant proportion of the number of edges in each component with degree two or less, so simply resampling invalid edges does not affect the running time.

As we see no easy way to recompute the cost in time less than iterating over all edges of the demand graph, we evaluate the cost using the computation algorithm described in Section~\ref{sec:cost}. As the set of different operator results for a given graph has the size of $O(n^3)$ and is difficult to enumerate, this operator never gives up sampling.

\begin{figure}[!t]
    \centering
    \begin{tikzpicture}[def/.style={draw, circle, minimum width=5pt}]
        \node[def] (Al1) at (0.0,0.0) {};
        \node[def] (Al2) at (0.5,0.0) {};
        \node[def] (Al3) at (0.0,0.5) {};
        \node[def,fill=yellow] (Al4) at (0.5,0.5) {};
        \node[def] (Al5) at (0.5,1.0) {};
        \node[def] (Al6) at (0.0,1.0) {};
        \draw (Al1)--(Al2) (Al2)--(Al3) (Al3)--(Al4);
        \node[def] (Ar1) at (1.5,0.0) {};
        \node[def] (Ar2) at (2.0,0.0) {};
        \node[def,fill=yellow] (Ar3) at (1.5,0.5) {};
        \node[def] (Ar4) at (2.0,0.5) {};
        \node[def] (Ar5) at (2.0,1.0) {};
        \node[def] (Ar6) at (1.5,1.0) {};
        \draw (Ar1)--(Ar2) (Ar2)--(Ar4) (Ar4)--(Ar5) (Ar5)--(Ar6);
        \draw[red, very thick] (Al4)--(Ar3) (Al4)--(Al5) (Al5)--(Al6) (Ar3)--(Ar4);

        \draw[->, very thick] (2.5,0.5)--(3.5,0.5);

        \node[def] (Bl1) at (4.0,0.0) {};
        \node[def] (Bl2) at (4.5,0.0) {};
        \node[def] (Bl3) at (4.0,0.5) {};
        \node[def] (Bl4) at (4.5,0.5) {};
        \node[def,fill=yellow] (Bl5) at (4.5,1.0) {};
        \node[def] (Bl6) at (4.0,1.0) {};
        \draw (Bl1)--(Bl2) (Bl2)--(Bl3) (Bl3)--(Bl4) (Bl4)--(Bl5);
        \node[def] (Br1) at (5.5,0.0) {};
        \node[def] (Br2) at (6.0,0.0) {};
        \node[def,fill=yellow] (Br3) at (5.5,0.5) {};
        \node[def] (Br4) at (6.0,0.5) {};
        \node[def] (Br5) at (6.0,1.0) {};
        \node[def] (Br6) at (5.5,1.0) {};
        \draw (Br1)--(Br2) (Br2)--(Br4) (Br4)--(Br5) (Br5)--(Br6);
        \draw[blue, very thick] (Bl5)--(Br3) (Bl5)--(Bl6) (Br3)--(Br4);
    \end{tikzpicture}\par
    (a) Distance decreases by 1\par if the connection moves towards the vertex \par\vspace{2ex}
    \begin{tikzpicture}[def/.style={draw, circle, minimum width=5pt}]
        \node[def] (Al1) at (0.0,0.0) {};
        \node[def] (Al2) at (0.5,0.0) {};
        \node[def] (Al3) at (0.0,0.5) {};
        \node[def,fill=yellow] (Al4) at (0.5,0.5) {};
        \node[def] (Al5) at (0.5,1.0) {};
        \node[def] (Al6) at (0.0,1.0) {};
        \draw (Al1)--(Al2) (Al2)--(Al3) (Al4)--(Al5) (Al5)--(Al6);
        \node[def] (Ar1) at (1.5,0.0) {};
        \node[def] (Ar2) at (2.0,0.0) {};
        \node[def,fill=yellow] (Ar3) at (1.5,0.5) {};
        \node[def] (Ar4) at (2.0,0.5) {};
        \node[def] (Ar5) at (2.0,1.0) {};
        \node[def] (Ar6) at (1.5,1.0) {};
        \draw (Ar1)--(Ar2) (Ar2)--(Ar4) (Ar4)--(Ar5) (Ar5)--(Ar6);
        \draw[red, very thick] (Al4)--(Ar3) (Ar3)--(Ar4) (Al3)--(Al4);

        \draw[->, very thick] (2.5,0.5)--(3.5,0.5);

        \node[def] (Bl1) at (4.0,0.0) {};
        \node[def] (Bl2) at (4.5,0.0) {};
        \node[def] (Bl3) at (4.0,0.5) {};
        \node[def] (Bl4) at (4.5,0.5) {};
        \node[def,fill=yellow] (Bl5) at (4.5,1.0) {};
        \node[def] (Bl6) at (4.0,1.0) {};
        \draw (Bl1)--(Bl2) (Bl2)--(Bl3) (Bl5)--(Bl6);
        \node[def] (Br1) at (5.5,0.0) {};
        \node[def] (Br2) at (6.0,0.0) {};
        \node[def,fill=yellow] (Br3) at (5.5,0.5) {};
        \node[def] (Br4) at (6.0,0.5) {};
        \node[def] (Br5) at (6.0,1.0) {};
        \node[def] (Br6) at (5.5,1.0) {};
        \draw (Br1)--(Br2) (Br2)--(Br4) (Br4)--(Br5) (Br5)--(Br6);
        \draw[blue, very thick] (Bl5)--(Br3) (Bl3)--(Bl4) (Bl4)--(Bl5) (Br3)--(Br4);
    \end{tikzpicture}\par
    (b) Distance increases by 1\par if the connection moves away from the vertex
    \caption{Optimal edge replacement: the main principle}
    \label{fig:edge-optimal-replacement}
\end{figure}

\subsubsection{Edge replacement: optimal}\label{algo:mut-relinkO}
If we look closer at the random edge replacement operator, we may notice that, although the change of the cost typically involves many edges of the demand graph, the cost change can actually be partitioned in such a way that the choice of the new edge can be performed independently, and quite cheaply, for each of the connected components remaining after the edge removal.

To do this, we consider moving the connection point of the new edge in one of the components to one of the adjacent vertices (see Fig.~\ref{fig:edge-optimal-replacement}). If we split for a moment the vertices of the affected component (the left-hand-side component in Fig.~\ref{fig:edge-optimal-replacement}) into those which have their distance to the connection decreased, and those where this distance is increased, we may note that the cost is decreased by the total demand between the vertices from the first group and the entire second component (the right-hand-side component in the figure), and increased by the total demand between the vertices from the second group and the second component. 

We can easily maintain this demand change while each connection point traverses its respective component by moving along the edges, as in depth-first search, with a relatively easy preprocessing in $O(m_D + n)$ that amounts to computing the total demand in all subtrees assuming an arbitrary vertex to be chosen as a root. As all the changes require only the demand values, but not the particular distances, we can perform component traversals independently of each other. As a result, in $O(m_D + n)$ time we are able to find the edge that connects the two components remaining after removing any given edge, such that the resulting tree has the minimum possible cost.

Similarly to the edge switch operator, and unlike the random edge replacement operator, there are only $n-1$ possible mutation actions, so we can remember which edges have been tested without an improvement and give up if no improvements are possible.

\subsubsection{Subtree swap}\label{algo:mut-subtree}
Finally, we employ an even less local mutation operator, subtree swap, which bears the resemblance with crossover and mutation operators typically employed in genetic programming that often works with representations of programs based on their parse trees~\cite{koza}.

\begin{figure}[!t]
    \centering
    \begin{tikzpicture}[def/.style={draw, circle, minimum width=5pt},
                        del/.style={def, fill=cyan!20!white},
                        der/.style={def, fill=magenta!20!white},
                        dey/.style={def, fill=yellow}]
        \node[del] (A1) at (0.0,0.0) {};
        \node[del] (A2) at (0.0,0.5) {};
        \node[del] (A3) at (0.5,0.0) {};
        \node[dey] (A4) at (0.5,0.5) {};
        \node[dey] (A5) at (1.5,1.5) {};
        \node[der] (A6) at (1.5,2.0) {};
        \node[der] (A7) at (2.0,1.5) {};
        \node[der] (A8) at (2.0,2.0) {};
        \node[def] (AX) at (1.0,1.0) {};
        \node[def] (AY) at (1.3,0.7) {};
        \node[def] (AZ) at (0.7,1.3) {};
        \draw (A1)--(A4) (A2)--(A4) (A3)--(A4) (A5)--(A6) (A5)--(A7) (A5)--(A8);
        \draw (AX)--(AY) (AX)--(AZ);
        \draw[red, very thick] (A4)--(AY) (A5)--(AZ);

        \draw[->, very thick] (2.5,1.0)--(3.5,1.0);

        \node[del] (B1) at (4.0,0.0) {};
        \node[del] (B2) at (4.0,0.5) {};
        \node[del] (B3) at (4.5,0.0) {};
        \node[dey] (B4) at (4.5,0.5) {};
        \node[dey] (B5) at (5.5,1.5) {};
        \node[der] (B6) at (5.5,2.0) {};
        \node[der] (B7) at (6.0,1.5) {};
        \node[der] (B8) at (6.0,2.0) {};
        \node[def] (BX) at (5.0,1.0) {};
        \node[def] (BY) at (5.3,0.7) {};
        \node[def] (BZ) at (4.7,1.3) {};
        \draw (B1)--(B4) (B2)--(B4) (B3)--(B4) (B5)--(B6) (B5)--(B7) (B5)--(B8);
        \draw (BX)--(BY) (BX)--(BZ);
        \draw[blue, very thick] (B4)--(BZ) (B5)--(BY);
    \end{tikzpicture}
    \caption{The subtree swap operator}
    \label{fig:subtree-swap}
\end{figure}
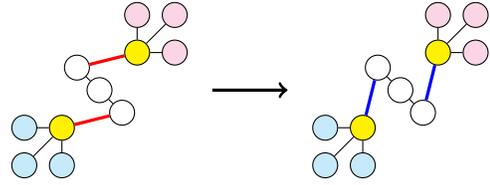

In our case, we sample two different vertices $v_1$ and $v_2$ that are not adjacent to each other, which will be the root vertices of the subtrees to be swapped. Then, we search for the vertices $u_1$ and $u_2$, such that $u_1$ is adjacent to $v_1$, $u_2$ is adjacent to $v_2$, and the path from $v_1$ to $v_2$ in the tree passes through $u_1$ and $u_2$. This can be done straightforwardly with a single depth-first search call in time $O(n)$. After that, we remove edges $v_1u_1$ and $v_2u_2$ and add edges $v_1u_2$ and $v_2u_1$. This is illustrated in Fig.~\ref{fig:subtree-swap}. Due to the complexity of determining the exact changes, we compute the cost from scratch which takes $O(n \log n + m_D)$ time.

For a small number of vertices in the tree, i.e., $n \le 10^3$, we track which pairs of vertices have been tried without an improvement, so that, unless an improvement is found, the operator can sample these pairs without replacement and give up when all pairs have been tried.

\section{Experiments}
\label{sec:exp}
\def\textfraction{0.0} 

\newcommand{\tableFacebook}{
    \hline\multicolumn{5}{c}{Facebook}\\\hline
    MST & 41980 & 42310 & 42503 & 10049208 \\
    BST/rand & 37733 & 37833 & 37900 & 1240767 \\
    BST/next & 38944 & 38944 & 38957 & 15226278 \\\hline
    MST+switch & 41451 & 42332 & 42731 & 610482518 \\
    MST+subtree & 36263 & 36446 & 36504 & 30234427 \\
    MST+replaceR & 36798 & 37188 & 38033 & 29815660 \\
    MST+replaceO & 35955 & 36060 & 36099 & 32082457 \\\hline
    BST+switch & 36681 & 36831 & 36888 & 417062800 \\
    BST+subtree & 35953 & 36059 & 36152 & 30382162 \\
    BST+replaceR & 36982 & 37684 & 38570 & 29323914 \\
    BST+replaceO & 36310 & 36412 & 36498 & 31970291 \\\hline
    MST+random & 34901 & 34965 & 34999 & 39416844 \\
    BST+random & 34935 & 34971 & 35026 & 41628005 \\
}
\newcommand{\tableFacebookBig}{
    \hline\multicolumn{5}{c}{FacebookBig}\\\hline
    MST & 13027298 & 13287066 & 13427598 & 81064 \\
    BST/rand & 9531006 & 9553654 & 9567716 & 1 \\
    BST/next & 9534295 & 9534295 & 9534296 & 446 \\\hline
    MST+switch & 13586432 & 15463896 & 19833956 & 4878540 \\
    MST+subtree & 10039313 & 10511094 & 10985240 & 437597 \\
    MST+replaceR & 13646491 & 13845472 & 14233621 & 426127 \\
    MST+replaceO & 10966297 & 11413588 & 12097809 & 432120 \\\hline
    BST+switch & 9412111 & 9447241 & 9466818 & 87554 \\
    BST+subtree & 9434345 & 9464690 & 9529420 & 93431 \\
    BST+replaceR & 9486279 & 9505413 & 9543037 & 84131 \\
    BST+replaceO & 9377749 & 9407370 & 9429707 & 100154 \\\hline
    MST+random & 9324367 & 9403820 & 9561352 & 590351 \\
    BST+random & 9231446 & 9294316 & 9316414 & 180777 \\
}
\newcommand{\tableHPC}{
    \hline\multicolumn{5}{c}{HPC}\\\hline
    MST & 2821329 & 2873195 & 2898571 & 7952388 \\
    BST/rand & 5460803 & 5493353 & 5509732 & 9607 \\
    BST/next & 5199117 & 5199117 & 5199117 & 695648 \\\hline
    MST+switch & 2905612 & 2957631 & 2999053 & 152493590 \\
    MST+subtree & 2126204 & 2150038 & 2173537 & 22634544 \\
    MST+replaceR & 2307009 & 2366124 & 2444087 & 21412024 \\
    MST+replaceO & 2206862 & 2217114 & 2226677 & 21076405 \\\hline
    BST+switch & 5245504 & 5306637 & 5320441 & 41993097 \\
    BST+subtree & 2789394 & 2894296 & 3040571 & 21854513 \\
    BST+replaceR & 3461280 & 3629122 & 3765951 & 21086333 \\
    BST+replaceO & 3203043 & 3279672 & 3321721 & 16748357 \\\hline
    MST+random & 1994996 & 2010771 & 2023985 & 27459469 \\
    BST+random & 2090163 & 2166391 & 2279279 & 27330167 \\
}
\newcommand{\tableMicrosoft}{
    \hline\multicolumn{5}{c}{Microsoft}\\\hline
    MST & 195765 & 196065 & 196280 & 19934065 \\
    BST/rand & 190126 & 190666 & 191350 & 1225785 \\
    BST/next & 215376 & 215376 & 215376 & 16410909 \\\hline
    MST+switch & 189602 & 190164 & 190583 & 633031106 \\
    MST+subtree & 176187 & 176652 & 177033 & 54096341 \\
    MST+replaceR & 181658 & 183873 & 185530 & 53918530 \\
    MST+replaceO & 177016 & 177387 & 177680 & 56465216 \\\hline
    BST+switch & 187495 & 188852 & 189294 & 277441413 \\
    BST+subtree & 176058 & 176771 & 177238 & 53817763 \\
    BST+replaceR & 186121 & 197414 & 206078 & 52487717 \\
    BST+replaceO & 179987 & 181303 & 181924 & 51765806 \\\hline
    MST+random & 174184 & 174312 & 174493 & 69428875 \\
    BST+random & 174243 & 174604 & 174862 & 72902931 \\
}
\newcommand{\tableProjecToR}{
    \hline\multicolumn{5}{c}{ProjecToR}\\\hline
    MST & 2144182 & 2151455 & 2154584 & 13124929 \\
    BST/rand & 2033936 & 2042839 & 2046711 & 744315 \\
    BST/next & 2284848 & 2284848 & 2284848 & 11690142 \\\hline
    MST+switch & 2079812 & 2088527 & 2092005 & 578565562 \\
    MST+subtree & 1879695 & 1885425 & 1888427 & 36256546 \\
    MST+replaceR & 1927741 & 1953622 & 1967282 & 36458395 \\
    MST+replaceO & 1893249 & 1895969 & 1898579 & 38214681 \\\hline
    BST+switch & 2009359 & 2018686 & 2024121 & 233016137 \\
    BST+subtree & 1880980 & 1892171 & 1898173 & 35240270 \\
    BST+replaceR & 2010122 & 2064176 & 2148628 & 34502627 \\
    BST+replaceO & 1928045 & 1951338 & 1959583 & 35566396 \\\hline
    MST+random & 1861339 & 1862483 & 1864095 & 47033703 \\
    BST+random & 1863430 & 1866897 & 1869250 & 47210798 \\
}
\newcommand{\tableartA}{
    \hline\multicolumn{5}{c}{$\alpha=0.0$}\\\hline
    MST & 314966 & 323023 & 327927 & 6181557 \\
    BST/rand & 1223682 & 1230622 & 1234015 & 1256 \\
    BST/next & 1251028 & 1251028 & 1251028 & 175206 \\\hline
    MST+switch & 332057 & 345866 & 352701 & 76768691 \\
    MST+subtree & 182071 & 190354 & 204405 & 17319181 \\
    MST+replaceR & 119237 & 123601 & 131398 & 14449684 \\
    MST+replaceO & 114134 & 115407 & 116776 & 12329723 \\\hline
    BST+switch & 1074886 & 1099451 & 1112937 & 32285913 \\
    BST+subtree & 292192 & 321449 & 346494 & 16802917 \\
    BST+replaceR & 291655 & 311257 & 341984 & 14247071 \\
    BST+replaceO & 250019 & 260324 & 267751 & 8269413 \\\hline
    MST+random & 111259 & 113083 & 120261 & 18177272 \\
    BST+random & 111220 & 116011 & 126771 & 17619055 \\
}
\newcommand{\tableartB}{
    \hline\multicolumn{5}{c}{$\alpha=0.25$}\\\hline
    MST & 320548 & 328397 & 333494 & 6430813 \\
    BST/rand & 1221781 & 1229798 & 1232179 & 1252 \\
    BST/next & 1249356 & 1249356 & 1249356 & 182411 \\\hline
    MST+switch & 319099 & 354021 & 363536 & 73980705 \\
    MST+subtree & 176365 & 194014 & 202627 & 17454896 \\
    MST+replaceR & 120966 & 125846 & 132771 & 14561093 \\
    MST+replaceO & 112918 & 114974 & 116345 & 12398337 \\\hline
    BST+switch & 1082557 & 1106856 & 1112944 & 30046662 \\
    BST+subtree & 296425 & 314338 & 332775 & 16925828 \\
    BST+replaceR & 287111 & 309053 & 329474 & 14329092 \\
    BST+replaceO & 238495 & 256396 & 264889 & 8437018 \\\hline
    MST+random & 111181 & 114013 & 122363 & 17481594 \\
    BST+random & 111189 & 113964 & 133407 & 17628255 \\
}
\newcommand{\tableartC}{
    \hline\multicolumn{5}{c}{$\alpha=0.5$}\\\hline
    MST & 305864 & 315296 & 319634 & 6406305 \\
    BST/rand & 1223361 & 1228354 & 1231660 & 1284 \\
    BST/next & 1246782 & 1246782 & 1246782 & 177950 \\\hline
    MST+switch & 320197 & 340293 & 345312 & 77539430 \\
    MST+subtree & 178388 & 186650 & 197361 & 17596166 \\
    MST+replaceR & 115912 & 121764 & 126822 & 14826849 \\
    MST+replaceO & 112641 & 113772 & 115297 & 12494553 \\\hline
    BST+switch & 1077832 & 1097265 & 1106260 & 34299502 \\
    BST+subtree & 283217 & 307189 & 332545 & 17054650 \\
    BST+replaceR & 277118 & 298349 & 328668 & 14354453 \\
    BST+replaceO & 240446 & 253749 & 260189 & 8459932 \\\hline
    MST+random & 110589 & 112214 & 114976 & 19435618 \\
    BST+random & 110844 & 113319 & 124428 & 19146924 \\
}
\newcommand{\tableartD}{
    \hline\multicolumn{5}{c}{$\alpha=0.75$}\\\hline
    MST & 292260 & 302487 & 309321 & 6206266 \\
    BST/rand & 1216492 & 1220815 & 1226334 & 1234 \\
    BST/next & 1242278 & 1242278 & 1242278 & 174685 \\\hline
    MST+switch & 318988 & 326270 & 334701 & 75145051 \\
    MST+subtree & 171198 & 179327 & 195025 & 17426273 \\
    MST+replaceR & 113440 & 119278 & 124543 & 14410834 \\
    MST+replaceO & 110769 & 112077 & 112548 & 12353661 \\\hline
    BST+switch & 1070368 & 1080898 & 1088404 & 37177750 \\
    BST+subtree & 268020 & 294621 & 323000 & 16877155 \\
    BST+replaceR & 278113 & 304875 & 327032 & 14252785 \\
    BST+replaceO & 220099 & 247229 & 256977 & 8322455 \\\hline
    MST+random & 109750 & 111528 & 117198 & 17247531 \\
    BST+random & 109703 & 111683 & 115701 & 17255646 \\
}
\newcommand{\tableartE}{
    \hline\multicolumn{5}{c}{$\alpha=0.9$}\\\hline
    MST & 258063 & 262088 & 266126 & 6480083 \\
    BST/rand & 1199839 & 1208647 & 1211149 & 1288 \\
    BST/next & 1227895 & 1227895 & 1227895 & 178337 \\\hline
    MST+switch & 274159 & 282532 & 289180 & 78279894 \\
    MST+subtree & 149626 & 163165 & 172017 & 17240927 \\
    MST+replaceR & 111212 & 114468 & 118496 & 14629868 \\
    MST+replaceO & 108618 & 109019 & 109519 & 12252616 \\\hline
    BST+switch & 1053371 & 1066404 & 1077171 & 32739728 \\
    BST+subtree & 225512 & 250357 & 274893 & 16688458 \\
    BST+replaceR & 256368 & 274710 & 293642 & 14305024 \\
    BST+replaceO & 204653 & 214146 & 226227 & 8361872 \\\hline
    MST+random & 108127 & 108833 & 111724 & 19314674 \\
    BST+random & 108117 & 109084 & 113176 & 18646159 \\
}
\newcommand{\tablepFabric}{
    \hline\multicolumn{5}{c}{pFabric}\\\hline
    MST & 385603 & 396925 & 404627 & 6664364 \\
    BST/rand & 352828 & 353091 & 353250 & 1268820 \\
    BST/next & 355385 & 355385 & 355385 & 15159475 \\\hline
    MST+switch & 389540 & 401035 & 406146 & 585701743 \\
    MST+subtree & 352406 & 353847 & 354540 & 21233235 \\
    MST+replaceR & 354150 & 355932 & 357463 & 21706670 \\
    MST+replaceO & 351134 & 351758 & 352093 & 23418446 \\\hline
    BST+switch & 348943 & 349509 & 349728 & 460648062 \\
    BST+subtree & 347561 & 348282 & 348496 & 21112944 \\
    BST+replaceR & 353463 & 354961 & 356201 & 20470019 \\
    BST+replaceO & 352049 & 352384 & 352595 & 23609006 \\\hline
    MST+random & 344974 & 345269 & 345366 & 27934373 \\
    BST+random & 344780 & 345305 & 345418 & 27793645 \\
}

Based on the heuristics outlined in Section~\ref{sec:heuristics}, we test the following algorithms:
\begin{itemize}
    \item ``MST'': repeated calls to the maximum spanning tree heuristic, as in Section~\ref{algo:mst}, with edges of equal weights randomly shuffled between the calls.
    \item ``BST/rand'': the construction of optimal binary search trees, as in Section~\ref{algo:bst-rand}, using randomly generated permutations of vertices.
    \item ``BST/next'': the construction of optimal binary search trees using lexicographical enumeration of permutations of vertices, as in Section~\ref{algo:bst-next}.
    \item ``MST+switch'': the local search using the MST heuristic as the initialization procedure and the edge switch mutation, as in Section~\ref{algo:mut-switch}.
    \item ``MST+subtree'': the local search using the MST heuristic as the initialization procedure and the subtree swap mutation, as in Section~\ref{algo:mut-subtree}.
    \item ``MST+replaceR'': the local search using the MST heuristic as the initialization procedure and the random edge replacement mutation, as in Section~\ref{algo:mut-relinkR}.
    \item ``MST+replaceO'': the local search using the MST heuristic as the initialization procedure and the optimal edge replacement mutation, as in Section~\ref{algo:mut-relinkO}.
    \item ``MST+random'': the local search using the MST heuristic as the initialization procedure and one of the following three mutation operators chosen at random each time: the edge switch, the subtree swap, and the optimal edge replacement.
    \item ``BST+switch'', ``BST+subtree'', ``BST+replaceR'', ``BST+replaceO'', ``BST+random'': same as the above but using the BST/rand heuristic as the initialization procedure.
\end{itemize}

To investigate the performance of these algorithms, we use two groups of tests, where each test is essentially a demand graph: the synthetic tests and the real-world tests.

The synthetic tests are adapted from~\cite{avin2020complexity} where they test the properties of dynamically adjusting demand-aware networks. They are characterized by the set of possible vertex pairs between which a message is assumed to be passed, and the temporal locality parameter $\alpha$: the next request pair is chosen with probability $\alpha$ while a random pair is chosen otherwise. In the context of this work, the temporal behavior is not considered, so the demand matrix element $W_{ij}$ is equal to the number of messages passed between $i$ and $j$ in the test. We use five such tests with the same number of vertices $1023$ and the temporal locality parameters $\alpha \in \{ 0.0, 0.25, 0.5, 0.75, 0.9\}$.

The following six real-world tests, adapted from their respective sources similarly to the above, are used:
\begin{itemize}
    \item The small and the large tests from~\cite{roy2015inside}, which we refer to further as, correspondingly, ``Facebook'' (100 vertices, 2990 edges) and ``FacebookBig'' (10000 vertices, 151677 edges).
    \item Two tests from~\cite{ghobadi2016projector} referred to as ``ProjecToR'' (121 vertices, 2322 edges) and ``Microsoft'' (100 vertices, 1431 edges).
    \item The test from~\cite{doe2016characterization} referred to as ``HPC'' (544 vertices, 1620 edges).
    \item The test from~\cite{alizadeh2013pfabric} referred to as ``pFabric'' (100 vertices, 4506 edges).
\end{itemize}

For each algorithm and each test we conducted $20$ independent runs with a time limit of one hour ($3600$ seconds). Note that all mentioned algorithms are capable of running for indefinite amount of time, except for ``BST/next'' which may terminate after testing all permutations of vertices, but none of the tests were sufficiently small for this to happen.

To conduct these runs, we used two identical computers, each with two AMD Opteron (TM) 6272 CPUs amounting to 32 cores clocked at 2.1 GHz, and with 256 gigabytes of memory available. All algorithms were implemented in Java and ran using the OpenJDK VM version 17.0.9. The anonymized version of the code is available in a dedicated repository\footnote{\url{https://anonymous.4open.science/r/tree-for-network-1973}}.

\begin{table}[!t]
\caption{Costs for synthetic tests}
\label{tbl:cost:synth}
\centering
\scalebox{0.97}{
\begin{tabular}{l|rrr|r}
Algorithm & Min & Median & Max & Median queries\\
\tableartA
\tableartB
\tableartC
\tableartD
\tableartE
\end{tabular}
}
\end{table}

\begin{table}[!t]
\caption{Costs for smaller real-world tests}
\label{tbl:cost:real}
\centering
\scalebox{0.97}{
\begin{tabular}{l|rrr|r}
Algorithm & Min & Median & Max & Median queries \\
\tableFacebook
\tableMicrosoft
\tableHPC
\tableProjecToR
\tablepFabric
\end{tabular}
}
\end{table}

\begin{table}[!t]
\caption{Costs for the biggest real-world test}
\label{tbl:cost:real-big}
\centering
\begin{tabular}{l|rrr|r}
Algorithm & Min & Median & Max & Median queries \\
\tableFacebookBig
\end{tabular}
\end{table}

The results are presented in Table~\ref{tbl:cost:synth} for synthetic tests, Table~\ref{tbl:cost:real-big} for the FacebookBig test, and Table~\ref{tbl:cost:real} for all other real-world tests. We show minimum, median and maximum costs for each combination of a test and an algorithm, as well as the median number of generated solutions in the column named ``Median queries''. The median for both costs and queries is chosen instead of mean, because the distributions of randomized search heuristic outcomes are typically far from being normal~\cite{stattest}, so for all subsequent statistical tests we use a non-parametric test (namely, the Wilcoxon rank sum test~\cite{wilcoxon-tests,mann-whitney-u-test}, for which we use the implementation available in R~\cite{R}).

As a final comparison, we consider ``MST+random'' and ``BST+random'', as our best evolutionary algorithms, against ``MST'' and ``BST/rand''. For synthetic datasets, the improvement is almost three fold. For the small real datasets the results are a little bit controversial: the smallest improvement is just 3\% on pFabric which nevertheless can give a large boost for the datacenter, while, for example, for HPC the improvement is almost 1.5x. The average improvement is approximately $10\%$ which we consider significant. For our largest dataset, ``FacebookBig'', the improvement is small (approximately, $1\%$)~--- we explain this underperformance due to the fact that our algorithms made very small amount of iterations in one hour on such large graph, e.g., only one BST construction on a random permutation fits into this time-limit.

Next, we investigate a few research statistical questions about our algorithms using these data whether our local search makes sense.

\subsection{Does the local search improve the results?}

We ran Wilcoxon rank sum test on final costs of ``MST'' and all ``MST+'' algorithms, as well as of ``BST/rand'' and all ``BST+'' algorithms.

For the synthetic tests, ``MST+switch'' was clearly worse than ``MST'', and all other local search algorithms were clearly better than ``MST'', with $p$-value of $7.25 \cdot 10^{-12}$ in most of the cases, indicating that the ranges of samples did not intersect. The only exception was the case of ``MST+switch'' with $\alpha=0.25$ ($p = 9.984 \cdot 10^{-7}$, and of ``MST+replaceO'' with $\alpha=0.9$ ($p = 3.393 \cdot 10^{-8}$), where in the latter case it was solely due to multiple coinciding values.
All ``BST+'' versions were clearly better ($p = 7.25 \cdot 10^{-12}$) than ``BST/rand''.

For real-world cases, the results were similar at the level of $p = 3.8 \cdot 10^{-8}$, but, additionally, occasionally random edge replacement did not reach significance or even became worse. This happened with ``MST+replaceR'' on ``FacebookBig'' and with ``BST+replaceR'' on ``Microsoft'', ``ProjecToR'' and ``pFabric'' at various significance levels less than $0.043$.

As a result, we can see that there are mutation operators, such as subtree swap or optimal edge replacement, which, together with the local search framework, provide better performance than just using the initialization procedure indefinitely.

\subsection{Is edge switch faster than other mutations?}
\label{exp:rq:switch}

In Section~\ref{sec:cost}, we mentioned that the edge switch mutation uses a custom cost evaluation procedure, while other mutations compute the cost from scratch. Here, we can see that, in most cases, edge switch is indeed faster than other mutations at the significance level of $p=7.25 \cdot 10^{-12}$, and, according to the last column of the tables, the median number of queries is 3--20 times bigger, which indicates higher computational efficiency of edge switch.

The only exceptions are the random mutations which are still slower but at times slightly less significantly ($p < 2.5 \cdot 10^{-6}$), and the ``FacebookBig'' test together with BST-based algorithms. The latter can be explained by the fact that the time limit of 1~hour on such large input allowed only for one run of the BST initialization, and the remaining time was highly varying which hindered all the differences. 

Overall the increased computational efficiency of edge switch (as opposed to its effectiveness) is confirmed.

\subsection{BST: random or lexicographical enumeration?}
\label{exp:rq:next}

The number of queries of ``BST/next'' is always bigger than of ``BST/rand'' at the significance level $p < 3.393 \cdot 10^{-8}$, and typically the difference is huge (446 vs 1 query for ``FacebookBig'', almost 15 times for synthetic tests). However, the cost in the case of ``BST/next'' is better only in two cases (``HPC'' and ``FacebookBig'', $p<4.1 \cdot 10^{-9}$), and worse in other cases ($p<9.8 \cdot 10^{-9}$).

\subsection{Edge replacement: Is optimal better than random?}

Comparing ``BST+replaceR'' and ``BST+replaceO'', as well as ``MST+replaceR'' and ``MST+replaceO'', showed that the optimal edge replacement is always better at the significance level $p < 3.4 \cdot 10^{-8}$.

\subsection{Which mutation is the best?}

If not taking the random mutation choice into account, within BST-based and MST-based local search algorithms, the subtree swap mutation and optimal edge replacement are the main competitors. For the case of synthetic tests, the optimal edge replacement is the clear winner, whereas in other cases, the optimal edge replacement was better on three workload and worse on others. All these comparisons are significant at level $p < 3.4 \cdot 10^{-8}$.

However, the random choice out of three mutation operators is the clear winner in all the cases at the same significance level. This indicates that different mutation operators augment each other, most likely by employing different neighborhood metrics, such that if a solution is locally optimal (or simply hard to escape) for one of the mutation operators, another one can find the improving move faster.

\subsection{Which initialization is the best?}

Taking into account the previous section, it only makes sense to compare ``BST+random'' and ``MST+random''. For the synthetic results, they are mostly statistically indistinguishable (with an exception of $\alpha=0.5$, where MST is better at $p=0.0012$). For the real-world data, MST is better for ``HPC'', ``Microsoft'' and ``ProjecToR'' ($p < 3 \cdot 10^{-6}$), while BST is better for ``FacebookBig'' ($p = 3.328 \cdot 10^{-8}$). It seems that in practice it makes sense to consider both of them.

\section{Conclusion}
\label{sec:conclusion}

To summarize, we showed that the construction of an optimal demand-aware binary tree network is NP-hard. Then, we presented new effective randomized algorithms that search through the space of possible answers. Finally, we used an $(1+1)$ evolutionary algorithm with newly designed mutations which give an improvement of $10\%$ on average and up to $30\%$ on real-world workloads over our randomized algorithms.

As for future work, it would be interesting to investigate other mutations, such as replacing several edges at the same time in an optimal manner, and investigate the possibility to make crossovers, possibly also deterministic and optimal, between two binary tree networks. Some of our results, such as the unguided random subtree swap mutation being on par or better than the optimal edge replacement mutation, suggest that operators that are capable of big restructuring moves can be beneficial. Finally, we would like to generalize our algorithms for $k$-ary trees and more complicated networks with bounded degrees.

\bibliographystyle{abbrv}
\bibliography{references}

\end{document}